\documentclass[journal]{IEEEtran}

\usepackage{graphicx,booktabs,amsmath,amssymb,microtype,url}
\usepackage{algorithm}
\usepackage{algpseudocode}
\usepackage{tikz}
\usetikzlibrary{arrows.meta,positioning,fit,backgrounds,calc}
\graphicspath{{figures/}}
\usepackage[hidelinks]{hyperref}

\newtheorem{theorem}{Theorem}
\newtheorem{corollary}{Corollary}
\newtheorem{definition}{Definition}
\newtheorem{assumption}{Assumption}

\begin{document}

\title{Adaptive Participation Under Statically Equivalent\\
Incentives in Distributed Demand Response Systems}

\author{Xun~Shao,~\IEEEmembership{Senior~Member,~IEEE,}
        Ryoichi~Inoue, Shinken~Takekawa, and~Go~Hasegawa,~\IEEEmembership{Member,~IEEE}%
\thanks{X. Shao, R. Inoue and S. Takekawa are with the Department of Electrical and
Electronic Information Engineering, Toyohashi University of Technology, Toyohashi, Japan
(e-mail: shao.xun.ls@tut.jp). \emph{(Corresponding author: Xun Shao.)}}%
\thanks{G. Hasegawa is with the Research Institute of Electrical Communication, Tohoku
University, Sendai, Japan.}%
\thanks{This work has been submitted to the IEEE for possible publication. Copyright may be
transferred without notice, after which this version may no longer be accessible.}}

\maketitle

\begin{abstract}
Aggregators recruit distributed energy resources with settlement rules and participation
payments. Such designs are normally validated at fixed points: zero participation must cease to
be an equilibrium, and truthful capability reporting must remain a best reply. We ask whether
those checks determine the participation that owners reach once they adapt from the settlements
they receive. In a five-unit event with fixed dispatch, payment rule and penalty, we vary only
how a scarcity-contingent participation payment decays with the capability others have declared.
Of two decay structures that agree on all five static criteria, one reaches full participation
from a collapse initialization in 96 of 96 seeds and the other in none, within an 8000-round
horizon and with disjoint 95\% confidence intervals. The difference lies in the payoffs offered at partial
participation, which the static criteria never evaluate; it is a property of experience-based
feedback and closes when counterfactual payoffs are supplied. Because those payoffs make each
unit's settlement depend on what the others declared, we also ask what survives when the
mechanism is distributed. Running the aggregator and the five units as separate processes
reproduced the centralized reference at every round, and a deliberately misattributed
declaration was detected although every message was delivered.
\end{abstract}

\begin{IEEEkeywords}
Demand response, distributed energy resources, aggregator coordination, incentive design,
adaptive participation, multi-agent learning, distributed systems validation,
information-centric networking.
\end{IEEEkeywords}

% ============================================================ I
\section{Introduction}
\label{sec:intro}

\IEEEPARstart{A}{ggregators} recruit distributed energy resources into demand-response programs
by offering settlement rules and participation payments. Designing those payments is an active
area: mechanisms have been proposed for truthful and privacy-aware demand
response~\cite{tsaousoglou20,muthirayan20}, for flexibility procurement across
timescales~\cite{ikuta25}, for retail pricing~\cite{zheng26}, and for market clearing in
integrated and community settings~\cite{ge26,alahmed25}. A second line assumes that owners are
not fixed but learn: reinforcement learning has been applied to demand response with multiple
aggregators~\cite{fraija24}, safe pricing has been designed under bandit
feedback~\cite{hutchinson24}, and incentives have been adapted online against players who
update their own strategies~\cite{maheshwari26}.

These two lines meet at a validation question. A participation payment is normally accepted once
zero participation is no longer an equilibrium and truthful capability reporting remains a best
reply~\cite{vickrey61,myerson81}. Both conditions are evaluated at full and at zero
participation, and both describe a fixed point. Owners who learn from the settlements they
receive do not solve for that fixed point; they move through intermediate states on the way to
it.

We ask what those fixed-point checks leave undetermined. In a five-unit event with fixed
dispatch, payment rule, penalty and reporting calibration, we vary only the shape of a
scarcity-contingent participation payment: how it decays with the capability the other units have
declared. Two decay structures satisfy all five static criteria with identical entry thresholds,
so no standard check separates them. Under owners who observe only their own realized
settlement, one reaches full participation from a collapse initialization in 96 of 96 seeds and
the other in none, within the canonical 8000-round horizon and with disjoint 95\% confidence
intervals. The two structures pay the same amount to a lone entrant and nothing once the
capability target is met; they differ only at the intermediate states, and it is there that the
outcome is decided. The effect belongs to experience-based feedback: supplying counterfactual
payoffs removes it.

The transfer that produces this behavior is jointly coupled. What a unit is paid depends on what
the others declared, so the payoff a learner observes is shaped by the rest of the population.
That coupling also complicates deployment. A real program runs on physically separate devices:
each unit holds its own estimate of what its declarations are worth, and the aggregator holds the
settlement table. Decision, profile assembly, settlement and learning update then happen on
different machines, and a discrepancy in any of them can reach several units in the same round
and is written into state that persists. Counting delivered messages does not detect this, and
neither does comparing final outcomes, because different trajectories can end at the same
profile. For distributed coordination mechanisms whose devices carry adaptive state, what needs
evidence is the agreement of that state, round by round.

We therefore state a fidelity criterion over the quantities the mechanism is defined on and
evaluate a realization in which the aggregator and the five units run as separate processes over
content-centric forwarders~\cite{cefore}. The substrate is an implementation choice, reported so
the run can be reproduced; we propose no networking protocol and claim no advantage over other
messaging substrates.

\smallskip\noindent\textbf{Contributions.}
\begin{itemize}
\item A jointly coupled participation transfer for aggregator-coordinated demand response,
together with its static entry thresholds and its payoffs at partial participation, which the
static criteria do not evaluate.
\item Evidence that two transfers indistinguishable under those criteria lead adaptive owners to
different outcomes within the canonical horizon, and that the difference is located in the
partial-participation payoffs.
\item A per-round fidelity criterion for this class of mechanism and its evaluation on a
distributed realization, which reproduced the centralized reference exactly over the full
8000-round trajectory, together with a preregistered fault that the criterion detects despite
successful delivery.
\end{itemize}

The evidence is bounded to five units, one operating point and finite horizons.

% ============================================================ II
\section{Related Work}
\label{sec:related}

\subsection{Participation incentives in demand response, and their validation}

Aggregator-side demand-response designs are normally validated on equilibrium properties: a
payment rule is accepted once zero participation is no longer an equilibrium and truthful
reporting remains a best reply. This is the practice in aggregator mechanism
baselines~\cite{tsaousoglou20,muthirayan20}, in flexibility-procurement designs across
timescales~\cite{ikuta25}, in consumer-facing retail pricing~\cite{zheng26,kim24}, and in integrated market clearing and
community-market settings~\cite{ge26,alahmed25,xu25}; the underlying framework is
truthful auction and optimal mechanism design~\cite{vickrey61,myerson81}, with zero
participation as a coordination failure in the global-games sense~\cite{morrisshin}. The
validation criteria relevant to the present comparison are evaluated at fixed points.

A second line assumes participants adapt. Reinforcement learning has been applied to
demand response with multiple aggregators~\cite{fraija24}, and safe pricing under bandit
feedback has been designed for distributed resource allocation~\cite{hutchinson24}. Closest to
the present setting, adaptive incentive design with learning agents establishes a two-timescale
system in which the mechanism updates more slowly than the players and converges to a fixed
point that is socially optimal~\cite{maheshwari26}.

That work designs a mechanism for adaptive participants and asks whether it converges to a
desirable fixed point. The present study asks the converse question about the validation
practice itself: whether two transfers that a static validation cannot distinguish are
distinguished by the adaptive outcome. Constructing a static equivalence class over named
criteria and then measuring whether the learning dynamics respect it is the approach taken here;
we did not find it treated in the demand-response or adaptive-incentive literature surveyed for
this section. The learning rule itself is adopted unchanged from the learning-in-games
literature~\cite{hartmascolell00}, with fictitious play~\cite{robinson51}, stochastic
stability~\cite{kandori93} and independent reinforcement learners~\cite{tan93} as the standard
alternatives.

\subsection{Realizing aggregator--DER coordination as a distributed system}

Aggregator--DER coordination has been realized as a distributed system in several forms:
decentralized demand response across coupled energy carriers~\cite{wang24}, and distributed
dispatch evaluated under persistent packet loss~\cite{ren24}. Communication-architecture studies,
including information-centric proposals for smart-grid communications~\cite{ameme17}, evaluate
latency, loss, retransmission and cache behavior. These evaluate the transport or the delivered
service.
The present study takes the substrate as given---it is instantiated on the Cefore CCNx
platform~\cite{cefore} and its emulator, without any claim of advantage over other messaging
substrates---and asks a question the transport evaluations do not: whether the mechanism's own
quantities survive the distribution.

\subsection{Establishing that a distributed implementation realizes its reference}

Checking an implementation against a reference is established practice. Conformance testing uses
an executable model as both test generator and oracle, with a conformance relation based on
trace inclusion: traces of events are recorded in the system under test and replayed in the
model~\cite{aichernig08}. Decentralized runtime verification monitors temporal-logic properties
over a running distributed system~\cite{mostafa15}.

Each of these establishes something our setting needs, and none establishes what it requires.
Conformance relations are established over an observation interface, whereas the quantities
whose agreement determines whether this mechanism was executed are private by construction: a
participant's estimate and visit count never leave it, and the assembled joint profile is never
published. Over the observables this realization exposes, an implementation that applies a
correct payoff to the wrong internal entry emits the same traces a faithful one emits, so
agreement on those traces does not settle the question here; a relation defined over a richer
observation interface would see more. Property monitoring is likewise insufficient on its own,
because the property at issue is equality of the whole state trajectory with a reference, not
satisfaction of a specified temporal formula.

We therefore state a semantic-fidelity criterion for this class of mechanism---jointly coupled
settlement, private per-agent learning state, feedback from settlement into learning---and
require per-round equality of every quantity in the dependency chain. This is a conformance-style
relation extended to state that the realization does not expose, specialized to a mechanism
class; it is not a new verification technique, and it is not offered as a general theory.

% ============================================================ III
\section{Mechanism}
\label{sec:model}

\begin{figure}[t]
\centering
% Fig. 1 — system schematic, drawn from TikZ primitives (nodes, paths, labels).
% No raster, no traced artwork, no external image asset.
% Symbols follow Section III: a_i declaration, x_i commanded discharge,
% y^ex/y^me executed and metered delivery, w_i = U_i + R_i realised settlement.
\begin{tikzpicture}[
  font=\footnotesize,
  >={Stealth[length=1.5mm]},
  box/.style   ={draw, semithick, rounded corners=1pt, align=center, inner sep=1.4mm,
                 minimum height=8mm, text width=22mm},
  wide/.style  ={box, text width=29mm},
  flow/.style  ={->, semithick},
  learn/.style ={->, semithick, densely dashed},
  lab/.style   ={font=\scriptsize, inner sep=1.2pt, fill=white},
  note/.style  ={font=\scriptsize, text=black!75, inner sep=1pt}]

% ---- nodes -------------------------------------------------------------
\node[box]                        (env) {DR environment\\[1pt]
                                         \scriptsize exogenous $D_t,\ \lambda^-$};
\node[wide, right=10mm of env]    (agg) {DR aggregator\\[1pt]
                                         \scriptsize solves the dispatch\\[-1.5pt]
                                         \scriptsize from the joint profile};
\node[wide, below=12mm of agg]    (par) {Participants $i=1,\dots,N$\\[1pt]
                                         \scriptsize private flexibility state\\[-1.5pt]
                                         \scriptsize declares $a_i$};
\node[wide, below=13mm of par]    (set) {Settlement\\[1pt]
                                         \scriptsize $U_i$ by
                                         \eqref{eq:payment}--\eqref{eq:utility},\\[-1.5pt]
                                         \scriptsize transfer $R_i=s_0f(Q_{-i})$};
\node[box, left=21mm of set]      (lrn) {Learner at each $i$\\[1pt]
                                         \scriptsize updates $\hat u_{i,t},\ n_{i,t}$};

% ---- event flow --------------------------------------------------------
\draw[flow] (env) -- node[lab, above] {$D_t$} (agg);
\draw[flow] ($(agg.south)+(6.5mm,0)$) -- node[lab, right, pos=.5] {$x_i$}
            ($(par.north)+(6.5mm,0)$);
\draw[flow] ($(par.north)-(6.5mm,0)$) -- node[lab, left, pos=.5] {$a_i$}
            ($(agg.south)-(6.5mm,0)$);
\draw[flow] (par) -- node[lab, right, pos=.5] {$y^{\mathrm{ex}}_i,\ y^{\mathrm{me}}_i$} (set);

% ---- learning loop -----------------------------------------------------
\draw[learn] (set.west) -- node[lab, above, pos=.5] {$w_i\!=\!U_i\!+\!R_i$} (lrn.east);
\draw[learn] (lrn.west) -- ++(-4mm,0) |- node[lab, above, pos=.72] {next $a_i$} (par.west);

% ---- legend ------------------------------------------------------------
\begin{scope}[shift={($(lrn.south west)+(-4mm,-4mm)$)}]
  \draw[flow]  (0,0) -- (5mm,0);   \node[note, anchor=west] at (6mm,0)    {event flow, one round};
  \draw[learn] (0,-4mm) -- (5mm,-4mm);
  \node[note, anchor=west] at (6mm,-4mm) {local learning loop};
\end{scope}
\end{tikzpicture}
\caption{Aggregator coordinating $N$ distributed units over one demand-response event. The
figure depicts the \emph{economic} coordination---declaration, dispatch and settlement---and not
a communication architecture; the distributed realization and its message flows are the subject
of Fig.~\ref{fig:arch}.}
\label{fig:overview}
\end{figure}

\subsection{Event, units, and declarations}

Fig.~\ref{fig:overview} shows the setting. $N=5$ units are coordinated over a $24$-hour day at
hourly resolution. Each unit $i$ holds a
battery with usable capacity $13.5$~kWh, charge and discharge limits $5$~kW, and one-directional
efficiency $0.95$. Each unit has a \emph{private flexibility state} $u_i\in\{\text{normal},
\text{stressed}\}$, unobserved by the aggregator, whose only physical effect is on the usable
discharge power,
\begin{equation}
\bar{p}_i = P_{\max}\cdot
\begin{cases}
1, & u_i = \text{normal},\\
\sigma_{\mathrm{str}}, & u_i = \text{stressed},
\end{cases}
\label{eq:cap}
\end{equation}
with $P_{\max}=5$~kW and $\sigma_{\mathrm{str}}=0.5$. The state is drawn uniformly at the start
of the day and evolves as a symmetric two-state Markov chain with
$\Pr[u_{i,t+1}=u_{i,t}]=\rho=0.95$. The demand-response event is exogenous and deterministic: a
reduction of $D_t=P_D=15.0$~kW is requested in each of $K=2$ contiguous hours of highest total
price.

Before the event each unit either abstains or declares one of two contract items, conservative
or aggressive, with declared discharge limits $q_C=2.50$~kW and $q_A=3.00$~kW and availability
payments $b_C=0.584896$ and $b_A=0.677219$~\$. Write $q^{\mathrm{decl}}_i$ for the declared limit
of a participating unit and $q^{\mathrm{decl}}_i=0$ for a unit that abstains; truthful
declaration is conservative when stressed and aggressive when normal. Write
$\mathbf{a}=(a_1,\dots,a_N)$ for the \emph{joint declaration profile}. This object, not any
individual declaration, is what the aggregator acts on.

\subsection{Dispatch and settlement}

The aggregator solves its dispatch program \emph{from the joint profile} and obtains each unit's
commanded discharge power $x_i$. The command splits into a guaranteed and an above-guarantee
block,
\begin{equation}
g_i = \min\big(x_i,\ q^{\mathrm{decl}}_i,\ c^{\mathrm{CE}}_i\big),
\qquad z_i = x_i - g_i,
\label{eq:blocks}
\end{equation}
where $c^{\mathrm{CE}}_i$ is the aggregator's \emph{certainty-equivalent capability estimate}:
the expectation of \eqref{eq:cap} under its current belief $b_i=\Pr[u_i=\text{stressed}]$,
\begin{equation}
c^{\mathrm{CE}}_i = P_{\max}\big[(1-b_i) + \sigma_{\mathrm{str}}\,b_i\big],
\label{eq:cce}
\end{equation}
with $b_i$ maintained by a Bayes filter over the commanded and metered discharge and propagated
between hours at the persistence $\rho$ of Section~\ref{sec:model}-A. Execution is
limited by true capability and available energy, and the meter is noisy:
\begin{equation}
y^{\mathrm{ex}}_i = \min\big(x_i,\ \bar{p}_i,\ r_i\big),
\qquad
y^{\mathrm{me}}_i = \max\big(0,\ y^{\mathrm{ex}}_i + \varepsilon_i\big),
\label{eq:meas}
\end{equation}
with $r_i$ the energy headroom and $\varepsilon_i\sim\mathcal{N}(0,\varsigma^2)$,
$\varsigma=0.10$~kW. Three settlement quantities follow, accumulated over the $K$ event hours:
\begin{align}
D^g_i &= \textstyle\sum_{t}\min\big(y^{\mathrm{ex}}_{i,t},\,g_{i,t}\big)\,\Delta t,
  \label{eq:Dg}\\
D^z_i &= \textstyle\sum_{t}\big(y^{\mathrm{ex}}_{i,t}-\min(y^{\mathrm{ex}}_{i,t},g_{i,t})\big)\,\Delta t,
  \label{eq:Dz}\\
s_i   &= \textstyle\sum_{t}\max\big(g_{i,t}-y^{\mathrm{me}}_{i,t}-\epsilon,\,0\big)\,\Delta t,
  \label{eq:short-i}
\end{align}
where $s_i$ is the verified shortfall, measured against the guaranteed block alone with a
tolerance $\epsilon=3\varsigma=0.30$~kW. The settlement pays each participating unit
\begin{equation}
P_i = b_i - \Pi\,s_i + \kappa\,D^z_i ,
\label{eq:payment}
\end{equation}
with $\Pi=1.3925$~\$/kWh the penalty rate on verified shortfall and $\kappa=0.1176$~\$/kWh the
rate at which the above-guarantee block is paid. Delivered energy costs the unit $\kappa$ per kWh
in degradation, so its utility is
\begin{equation}
U_i = P_i - \kappa\big(D^g_i + D^z_i\big) = b_i - \Pi\,s_i - \kappa\,D^g_i .
\label{eq:utility}
\end{equation}
The above-guarantee block cancels: it is paid at exactly the rate it costs. A unit that abstains
receives $P_i=U_i=0$. At the operating rate $\Pi=1.3925$ truthful declaration is a best reply for
every unit; nothing in what follows re-tunes any constant fixed here.

\subsection{The corrective participation transfer}

Truthful reporting being a best reply does not make participation one. Under
\eqref{eq:payment}--\eqref{eq:utility}, universal abstention is a strict Nash equilibrium, for an
accounting reason rather than a punitive one: a single participant carries more of the requested
reduction than the availability payment was sized for, and is charged $\kappa$ volumetrically on
the guaranteed energy it delivers. Let
\begin{equation}
\ell(u,a) = -\,\mathbb{E}\big[U_i\big] = \mathbb{E}\big[\Pi\,s_i + \kappa\,D^g_i - b_a\big]
\label{eq:ell}
\end{equation}
be the \emph{single-participant loss} of a unit in state $u$ declaring item $a$ while every other
unit abstains. Table~\ref{tab:ell} gives its four values; all are positive, which is why zero
participation is strict.

\begin{table}[t]
\caption{Single-participant loss $\ell$, in \$ per event day.}
\label{tab:ell}
\centering
\begin{tabular}{llrr}
\toprule
state & item & $\ell$ & standard error\\
\midrule
normal   & conservative & $0.003104$ & $0.000000$\\
stressed & conservative & $0.003325$ & $0.000221$\\
normal   & aggressive   & $0.048305$ & $0.004916$\\
stressed & aggressive   & $0.199928$ & $0.009958$\\
\bottomrule
\end{tabular}
\end{table}

The corrective transfer is paid only where the aggregator is short of declared capability. Let
$Q_{-i}$ be the total capability declared by units other than $i$, and $\bar{Q}=9.0$~kW the
capability target, $0.60$ of the full-participation capability $Nq_A=15.0$~kW. The transfer is
\begin{equation}
R_i = \mathbf{1}\{i \text{ participates}\}\; s_0\, f\big(Q_{-i}\big),
\label{eq:transfer}
\end{equation}
paid on top of \eqref{eq:payment}, so a participating unit's realized settlement is $U_i+R_i$.
The scale $s_0$ sets how large the payment can be; the \emph{decay structure} $f$ sets how it
falls as the other units declare more. We restrict $f$ by four conditions:
\begin{assumption}\label{as:class}
\begin{enumerate}
\item[\textup{(A0)}] $R_i=0$ for a unit that abstains, whatever the others declare;
\item[\textup{(A1)}] conditional on participating, $R_i$ does not depend on which item $i$
declared, since $f$ reads $Q_{-i}$ alone;
\item[\textup{(A2)}] $f(0)=1$, so a unit participating while all others abstain receives $s_0$;
\item[\textup{(A3)}] $f$ is non-increasing and $f(Q_{-i})=0$ once $Q_{-i}\ge\bar{Q}$.
\end{enumerate}
\end{assumption}

Equation~\eqref{eq:transfer} is the mechanism's joint coupling in explicit form: conditional on
participating, unit $i$'s transfer is a function of the other units' declarations and of nothing
else it controls. Coupling enters a second time through the dispatch, which is solved from the
whole profile and therefore fixes $x_i$---and hence $D^g_i$ and $s_i$ in
\eqref{eq:utility}---from $\mathbf{a}$ rather than from $a_i$.

\subsection{Invariance and the static criteria}

\begin{theorem}[Invariance]\label{th:inv}
Under Assumption~\ref{as:class} and $(N-1)q_C\ge\bar{Q}$, adding $R$ to the settlement leaves
every allocation, every payment and every contract-selection margin at the intended operating
point unchanged, for any $s_0$.
\end{theorem}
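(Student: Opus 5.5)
The plan is to fix the intended operating point as the full-participation, truthful profile $\mathbf{a}^\star$: every unit participates and declares $q_A$ when normal and $q_C$ when stressed. I then check in turn that each of the three objects named in the statement is untouched by $R$. The argument is almost entirely structural. It rests on (A0)--(A3) and on one inequality: at $\mathbf{a}^\star$, every unit faces $Q_{-i}\ge \bar{Q}$.

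\emph{Step 1: the transfer vanishes at the operating point.} At $\mathbf{a}^\star$ each of the other $N-1$ units declares at least $q_C$, whatever its state. So for every $i$,
\[
Q_{-i}\;\ge\;(N-1)q_C\;\ge\;\bar{Q},
\]
and (A3) gives $f(Q_{-i})=0$, hence $R_i=0$ for every $i$ and every realization of the private states. The bound also holds for any unilateral deviation by $i$, because $Q_{-i}$ does not involve $a_i$. So $R_i=0$ at every profile obtained from $\mathbf{a}^\star$ by changing one coordinate, including abstention by $i$. Abstention is also covered directly by (A0).

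\emph{Step 2: allocations and payments.} The dispatch $x_i$ is computed from $\mathbf{a}$ alone by the aggregator's program, so it does not read $R$; the same holds for the blocks \eqref{eq:blocks}, the belief $b_i$ and $c^{\mathrm{CE}}_i$ in \eqref{eq:cce}, execution and metering \eqref{eq:meas}, and the settlement quantities \eqref{eq:Dg}--\eqref{eq:short-i}. Adding $R$ therefore leaves every allocation unchanged at every profile, not only at $\mathbf{a}^\star$. The payment $P_i$ in \eqref{eq:payment} is likewise defined without $R$. The realized total settlement is $U_i+R_i$, and by Step 1 it equals $U_i$ at $\mathbf{a}^\star$, so payments there are unchanged for any $s_0$.

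\emph{Step 3: contract-selection margins.} I take a contract-selection margin to be the expected-utility difference to unit $i$, in state $u$, between declaring C and A, and between either item and abstaining, holding the others at $\mathbf{a}^\star_{-i}$.
\begin{itemize}
\item \emph{C versus A.} By (A1) the transfer is identical under both items, so it cancels from the difference at any profile.
\item \emph{Participating versus abstaining.} Step 1 gives $R_i=0$ in both branches, so the margin again equals the original $U$-margin.
\end{itemize}
Since neither argument used the value of $s_0$, the conclusion holds for all $s_0$. This also shows that truthfulness as a best reply at $\Pi=1.3925$ is preserved.

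\emph{Main obstacle.} The delicate point is making "the intended operating point" and "margin" precise enough that Step 1 covers every comparison in the margins. The key fact is that $Q_{-i}$ is evaluated with the others held at $\mathbf{a}^\star_{-i}$ in the worst case, all stressed and declaring $q_C$. That is exactly where the hypothesis $(N-1)q_C\ge\bar{Q}$ is needed; here it reads $4\cdot 2.5=10\ge 9$. I would state this worst case explicitly and note that the result is deliberately silent about partial-participation profiles, where $f$ can be positive.
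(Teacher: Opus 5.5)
Your proposal is correct and follows essentially the same route as the paper's proof. The allocation is unaffected because no payment term enters the dispatch; $R_i$ vanishes at the intended profile because $Q_{-i}\ge (N-1)q_C\ge\bar{Q}$ together with (A3); and the item-selection margin cancels by (A1). The paper reads ``margin'' more narrowly, as a difference between two participating declarations. Your added observation, that $Q_{-i}$ does not involve $a_i$ and so the participate-versus-abstain margin is also unaffected, is a harmless extension of that reading.
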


\noindent\emph{Proof:} See Appendix~\ref{app:inv}.

The invariance is exact, and it is what makes $s_0$ a free parameter: no choice of it can disturb
the reporting calibration. Since $f(0)=1$ for every admissible structure, the transfer at zero
participation is exactly $s_0$, so each single-participant loss is a constant an entry
requirement must clear:
\begin{equation}
\begin{aligned}
s_0^{\mathrm{elim}} &= \min_{(u,a)} \ell = 0.003104, \\
s_0^{\mathrm{entry,truthful}} &= \max_{\text{truthful }(u,a)} \ell = 0.048305, \\
s_0^{\mathrm{entry,any}} &= \max_{(u,a)} \ell = 0.199928 .
\end{aligned}
\label{eq:thresholds}
\end{equation}

\begin{corollary}\label{th:cor}
For $s_0 > s_0^{\mathrm{elim}}$, zero participation is not an equilibrium of the corrected
mechanism, and the reporting constants are unchanged.
\end{corollary}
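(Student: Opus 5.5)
The corollary has two claims, and I will prove them separately. The first is that zero participation is not an equilibrium. The second is that the reporting constants are unchanged.

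\emph{Zero participation is not an equilibrium.} Fix the all-abstain profile and any unit $i$. Under that profile every other unit abstains, so $Q_{-i}=0$. Abstaining gives $i$ a realized settlement of $0$: $U_i=0$ by the settlement definition and $R_i=0$ by (A0).

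Now suppose $i$ deviates to some item $a$ while in state $u$. Its expected settlement is $\mathbb{E}[U_i]+R_i$. By the definition \eqref{eq:ell} of the single-participant loss, $\mathbb{E}[U_i]=-\ell(u,a)$. By \eqref{eq:transfer} together with (A1) and (A2), $R_i=s_0 f(0)=s_0$, whichever item is declared. The deviation payoff is therefore $s_0-\ell(u,a)$.

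Choose $(u,a)$ to attain $\min_{(u,a)}\ell=s_0^{\mathrm{elim}}$; by Table~\ref{tab:ell} this is the normal, conservative pair. Then $s_0>s_0^{\mathrm{elim}}$ makes the deviation payoff strictly positive, so abstention is not a best reply for $i$ in that state. One point needs care here. The state is private and evolves during the day, so "state $u$" must be read as the state at declaration time. Unit $i$ observes that state and can condition its deviation on it. Since a profitable deviation exists in at least one state of positive probability, the strict-equilibrium property of the uncorrected mechanism fails. I will state explicitly that this is the equilibrium notion for declaration strategies conditioned on $u_i$, which matches how $\ell$ is tabulated per state.

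\emph{The reporting constants are unchanged.} This follows from Theorem~\ref{th:inv}, which holds for any $s_0$. The hypothesis $(N-1)q_C\ge\bar{Q}$ holds, since $4\cdot 2.5=10\ge 9$. The theorem then gives that allocations, payments and contract-selection margins at the operating point are unaffected. Independently, (A1) shows that $R_i$ cannot change the difference between conservative and aggressive declarations for a participating unit. So the constants $q_C,q_A,b_C,b_A,\Pi,\kappa$ and the truthful-best-reply property at $\Pi=1.3925$ all carry over unchanged.

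\emph{Main obstacle.} The substantive step is identifying the deviation payoff at zero participation exactly as $s_0-\ell(u,a)$. This requires the dispatch and settlement for a lone participant to be unaffected by adding $R$, so that $\ell$ computed without the transfer is still the right quantity. I will argue this directly: $R_i$ enters only additively in \eqref{eq:transfer}, and the dispatch is solved from $\mathbf{a}$ alone, not from settlements. The other point to watch is that the quantity in Table~\ref{tab:ell} is an estimate carrying a standard error. The proof treats $\ell$ as the exact expectation defined in \eqref{eq:ell}; I will note that the threshold is stated in terms of that expectation.
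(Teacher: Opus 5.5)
Your proof is correct and follows the paper's own argument: at zero participation the deviator has $Q_{-i}=0$, so by (A2) it collects exactly $s_0$ while by (A0) abstention stays at zero, which makes the minimum-loss deviation strictly profitable once $s_0>s_0^{\mathrm{elim}}$, and the unchanged reporting constants are taken from Theorem~\ref{th:inv}. Your extra remarks (the type-contingent equilibrium notion, the dispatch not depending on payments, $\ell$ as an exact expectation) refine that same route; the one thing to tighten is that you only need the profitable deviation for \emph{some} unit, not for every $i$, because the paper tabulates $\ell$ on a single designated unit.
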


\noindent\emph{Proof:} See Appendix~\ref{app:cor}.

Every $f$ satisfying Assumption~\ref{as:class} gives the same three thresholds, because all are
evaluated at $Q_{-i}=0$ where $f=1$. The class is therefore not pinned down by any of them. We
carry two members forward:
\begin{equation}
f^{\mathrm{lin}}(Q_{-i}) = \Big(1-\tfrac{Q_{-i}}{\bar{Q}}\Big)_{+},
\qquad
f^{\mathrm{thr}}(Q_{-i}) = \mathbf{1}\big\{Q_{-i}<\bar{Q}\big\}.
\label{eq:two}
\end{equation}

\begin{definition}[On-path payoff equivalence]\label{def:equiv}
Two corrective transfers are \emph{on-path payoff equivalent} when they agree on (i) the
allocation at the intended profile, (ii) the payments there, (iii) every contract-selection
margin, (iv) $s_0^{\mathrm{elim}}$, and (v) $s_0^{\mathrm{entry,any}}$.
\end{definition}

By Theorem~\ref{th:inv} and \eqref{eq:thresholds}, the two structures of \eqref{eq:two} are
on-path payoff equivalent for every $s_0$. This is an equivalence over five named criteria, not
an identity of the two mechanisms: their payoffs differ at the \emph{partial-participation}
states, which no criterion in Definition~\ref{def:equiv} inspects, since every one is evaluated
either at the intended profile or at zero participation, where the two agree by construction.

\subsection{Adaptive participants: learner state and the update}

Units do not solve the mechanism. Each unit keeps a running mean $\hat{u}_{i,t}(a)$ of the
realized settlement of each declaration $a$ \emph{it has actually made}, separately for each of
its two flexibility states $t$, together with a visit count $n_{i,t}(a)$. We call the pair
$(\hat{u}_i,n_i)$ unit $i$'s \emph{learner state}; it is private to the unit. Declarations are
drawn from the logit rule
\begin{equation}
\Pr\big[a_{i,t}=a\big]
= \frac{\exp\big(\beta\,\hat{u}_{i,t}(a)\big)}{\sum_{a'}\exp\big(\beta\,\hat{u}_{i,t}(a')\big)},
\label{eq:logit}
\end{equation}
at $\beta=4$, and only the declaration actually made is updated, toward the settlement actually
earned:
\begin{equation}
\hat{u}_{i,t+1}(a)=
\begin{cases}
\hat{u}_{i,t}(a) + \dfrac{w_i(r)-\hat{u}_{i,t}(a)}{n_{i,t}(a)}, & a=a_{i,t},\\[6pt]
\hat{u}_{i,t}(a), & \text{otherwise,}
\end{cases}
\label{eq:update}
\end{equation}
where $w_i(r)=U_i+R_i$ is unit $i$'s realized settlement in round $r$; we write $w$ rather than
$x$ because $x_i$ already denotes commanded discharge power in \eqref{eq:blocks}. Equation
\eqref{eq:update} is where the feedback structure lives: the counterfactual is never supplied,
so learning is bandit. A \emph{round} is one participation opportunity drawn from a fixed
library of $400$ event days; $8000$ rounds is twenty passes through the library.

\begin{algorithm}[t]
\caption{Experience-based participation learning}
\label{alg:learn}
\begin{algorithmic}[1]
\State $\hat{u}(i,a,t)\!\gets\!0$, $n(i,a,t)\!\gets\!0$ for participating $a$;
       $\hat{u}(i,\mathrm{out},t)\!\gets\!\hat{u}^{(0)}$, $n(i,\mathrm{out},t)\!\gets\!n_0$
\For{round $r=0,\dots,T-1$}
  \State draw the event day and each unit's flexibility state $t(i)$
  \State each unit draws $a(i)$ by \eqref{eq:logit}
  \State the aggregator solves the dispatch from the joint profile
  \State delivery is executed and measured by \eqref{eq:meas}; settle by
         \eqref{eq:payment}--\eqref{eq:transfer}
  \State unit $i$ observes its own realized settlement $w_i$, and nothing else
  \State update $n$ and $\hat{u}$ by \eqref{eq:update}
\EndFor
\end{algorithmic}
\end{algorithm}

Every unit starts each flexibility state with $\hat{u}^{(0)}=0.20$ and pseudo-count $n_0=2000$ on
abstaining, and with $\hat{u}=0$, $n=0$ on both participating declarations. The pseudo-count
makes this a \emph{basin} rather than a point: abstaining realizes exactly zero, so the
abstention estimate decays as
\begin{equation}
\hat{u}^{(k)}(\mathrm{out}) = \frac{n_0\,\hat{u}^{(0)}}{n_0+k},
\label{eq:decay}
\end{equation}
with $k$ counting abstention draws in that unit's state. Over $8000$ rounds a state accumulates
about $1531$ such draws under the linear structure, so \eqref{eq:decay} reaches about $0.11$. The
incumbent persists because the prior is heavy, not because abstention pays.

\subsection{The two dependencies this paper turns on}
\label{sec:deps}

Two consequences of \eqref{eq:transfer}, \eqref{eq:logit} and \eqref{eq:update} are used later,
and are stated here so that they are read from the mechanism rather than asserted about the
implementation.

\emph{Across units, within a round.} By \eqref{eq:transfer}, unit $i$'s transfer is a function of
$Q_{-i}$. A change in any single unit $j$'s declaration changes $Q_{-i}$ simultaneously for every
$i\ne j$, and can therefore change the realized settlement of every other participating unit in
that round. Under $f^{\mathrm{lin}}$, which is strictly decreasing on $[0,\bar{Q})$, any such
change moves $R_i$ wherever $Q_{-i}$ lies in that range; under $f^{\mathrm{thr}}$, which is
piecewise constant, it does so only when the change crosses $\bar{Q}$. The dispatch channel can
act independently of either.

\emph{Across rounds, within a unit.} By \eqref{eq:update} the realized settlement is absorbed
into $\hat{u}_{i,t}$, which is persistent, and by \eqref{eq:logit} the next declaration is drawn
from a distribution determined by $\hat{u}_{i,t}$. A difference in a realized settlement
therefore does not dissipate: it is retained in the estimate and can change subsequent
declarations. Because \eqref{eq:logit} is full-support, the change is one of probability; with
the random draws held fixed it can change the realized declaration itself.

Neither statement is a theorem, and neither is claimed as one. Both are direct readings of
\eqref{eq:transfer}, \eqref{eq:logit} and \eqref{eq:update}.

% ============================================================ IV
\section{Distributed Realization of the Joint-Coupled Mechanism}
\label{sec:arch}

Section~\ref{sec:model} specifies the mechanism as a single evolving system. This section
decomposes it into independent processes without altering any of its definitions, and identifies
where that decomposition places the mechanism's information dependencies at risk. The purpose is
not to describe a network deployment but to make explicit which scientific quantities must cross
a process boundary, who is entitled to hold each of them, and at what instant a round becomes
semantically complete.

\subsection{Semantic objects and their ownership}

\emph{Private to participant $i$.} The estimate $\hat{u}_i$ and the visit count $n_i$. These are
never transmitted. No other participant sees them, and the aggregator does not hold, mirror, or
reconstruct them.

\emph{Private to the aggregator.} The payoff table and the operating calibration of
Section~\ref{sec:model}. No participant sees them, and no participant can compute its own
settlement.

\emph{Crossing a process boundary.} Exactly three classes of object, and no others: the
\emph{round state}, produced by the aggregator and consumed by every participant; the
\emph{declaration} of participant $i$, produced by $i$ and consumed by the aggregator; and the
\emph{participant-specific settlement}, produced by the aggregator and consumed by $i$ alone.

Two derived quantities never cross a boundary. The \emph{joint profile} is constructed at the
aggregator from independently produced declarations and is not published. The \emph{learning
update} is performed at each participant on its own state after its own settlement arrives. This
is the whole of the shared state; the decomposition adds nothing to Section~\ref{sec:model} and
withholds nothing from it.

\subsection{The distributed round}

One round proceeds in the mechanism's dependency order. Each arrow in Fig.~\ref{fig:arch} is a
boundary crossing; the intervening steps are local. The aggregator publishes the round state;
each participant retrieves it, reads its own type, and draws a declaration from \eqref{eq:logit}
applied to its private estimate, using its own random stream; each participant publishes its
declaration; the aggregator retrieves all $N$ declarations and assembles the joint profile; the
aggregator computes the settlement vector and publishes $N$ participant-specific settlements;
each participant retrieves its own and applies \eqref{eq:update}. The round is semantically
complete when every participant has applied its update.

The assembly step is a barrier, and it is required by the mechanism rather than chosen for
convenience: settlement is a function of the profile, so a profile assembled from a proper subset
of the declarations is a different profile, and can yield different settlements for participants
other than those omitted.

\begin{figure*}[t]
\centering
\begin{tikzpicture}[>=Stealth,
  pbox/.style={draw,rounded corners=1pt,minimum width=16mm,minimum height=6mm,font=\footnotesize},
  cart/.style={draw,dotted,rounded corners=1pt,minimum width=16mm,minimum height=5.5mm,font=\scriptsize},
  lbl/.style={font=\scriptsize},
  sub/.style={font=\scriptsize\itshape,text=black!55}]

% ---- rows -------------------------------------------------------------
\foreach \i in {0,...,4}{
  \node[pbox] (p\i) at ({\i*24mm},0) {$p_\i$};
  \node[cart] (c\i) at ({\i*24mm},-42mm) {$\hat{u}_\i,\,n_\i$};
}
\node[lbl,anchor=east,align=right] at (-11mm,0) {declarations\\[-1pt]round $r$};
\node[lbl,anchor=east,align=right] at (-17mm,-42mm) {learner state\\[-1pt]$\rightarrow$ round $r{+}1$};

\node[draw,fill=black!6,rounded corners=1pt,minimum width=40mm,minimum height=6.5mm,
      font=\footnotesize] (jp) at (48mm,-16mm) {joint profile $\mathbf{a}(r)$};
\node[draw,fill=black!6,rounded corners=1pt,minimum width=52mm,minimum height=6.5mm,
      font=\footnotesize] (st) at (48mm,-27mm) {$\mathbf{w}(r)=\text{settlement}(\mathbf{a}(r))$};

% ---- ownership boundaries (drawn first, behind everything) -------------
\begin{scope}[on background layer]
  \node[draw,dashed,fill=black!2,rounded corners=2pt,inner sep=2.6mm,fit=(jp)(st)] (aggb) {};
  \node[draw,dashed,fill=black!2,rounded corners=2pt,inner sep=2.2mm,fit=(c0)(c4)] (parb) {};
\end{scope}

% ---- flows ------------------------------------------------------------
\foreach \i in {0,...,4}{ \draw[->,thick] (p\i.south) -- (jp.north); }
\draw[->,thick] (jp) -- (st);
\foreach \i in {0,...,4}{ \draw[->,thick] (st.south) -- (c\i.north); }
\draw[->,dashed] (parb.west) -- ++(-4mm,0) |- (p0.west);

\node[sub,anchor=west] at (102mm,-9mm)  {5 declarations in};
\node[sub,anchor=west] at (102mm,-33mm) {5 settlements out};
\node[sub,anchor=east,align=right] at ($(aggb.west)+(-2mm,0)$)
  {payoff table and\\[-1pt]calibration:\\[-1pt]never seen by\\[-1pt]participants};
\node[sub,anchor=west,align=left] at ($(parb.east)+(2mm,0)$)
  {learning state:\\[-1pt]never leaves\\[-1pt]the participant};

\node[font=\scriptsize\bfseries,align=center,text width=70mm] at (48mm,-54mm)
  {one inconsistent declaration can propagate through the joint profile\\
   to multiple settlements, and therefore to later decisions};

% ---- substrate band ---------------------------------------------------
\draw[black!30] (-30mm,-61mm) -- (128mm,-61mm);
\node[sub,anchor=west,text=black!45] at (-30mm,-68mm) {implementation};
\node[sub,anchor=west,text=black!45] at (-30mm,-71.5mm) {substrate};
\node[font=\ttfamily\scriptsize,text=black!55,anchor=west] at (4mm,-65.5mm)
  {ccnx:/paper3/agg/round/$\langle r\rangle$/state};
\node[font=\ttfamily\scriptsize,text=black!55,anchor=west] at (4mm,-69mm)
  {ccnx:/paper3/p$\langle i\rangle$/decision/$\langle r\rangle$};
\node[font=\ttfamily\scriptsize,text=black!55,anchor=west] at (4mm,-72.5mm)
  {ccnx:/paper3/agg/settle/$\langle r\rangle$/p$\langle i\rangle$};
\node[draw,black!45,text=black!55,rounded corners=1pt,font=\scriptsize,
      minimum width=18mm,minimum height=5mm] at (105mm,-69mm) {forwarder};
\end{tikzpicture}
\caption{Distributed realization of the mechanism. Each participant holds its own estimate and
count, which never leave it; the aggregator holds the payoff table and calibration, which no
participant sees. Declarations produced independently at five sites are assembled into a single
joint profile, and that profile determines every participant's settlement---so one inconsistent
declaration can propagate to multiple settlements and, through the learning update of
\eqref{eq:update}, to later decisions. The lower band shows the named-data exchanges and the
forwarder used to instantiate this decomposition; the substrate carries the semantics but is not
the subject of the paper.}
\label{fig:arch}
\end{figure*}

\subsection{Naming}

Each of the three object classes is carried as named data. A name encodes what the object is,
which round it belongs to, and---for declarations and settlements---which participant it
concerns; the three names appear in Fig.~\ref{fig:arch}. Publisher identity is the routing
prefix, so forwarding requires $N+1$ static entries and no routing protocol. This naming is a
direct transcription of the ownership structure above; it is not offered as a contribution, and
nothing in the mechanism depends on it. Any substrate able to carry the same three object classes
with the same round and identity qualifiers would serve.

\subsection{Why the decomposition places the science at risk}
\label{sec:risk}

A declaration is not an independent input to a per-participant calculation. The aggregator
combines declarations produced independently, at different sites, into one profile, and that
profile determines every participant's settlement. Four properties of the assembly are therefore
load-bearing: the round each declaration belongs to; the participant each declaration came from;
the membership of the assembled set; and the recipient of each settlement. If any of the four is
inconsistent, the mechanism has not been executed, even though every participant behaved
correctly in isolation.

The consequence is not local. Because settlement is indexed by the joint profile and depends
again on the other participants' declarations through the participation transfer, a single
misattributed or omitted declaration can change the settlement of multiple participants in that
round---the transfer each participant realizes is a function of the capability the others
declared, so one altered declaration enters every other participant's transfer at once. Whether a given participant's numerical settlement moves depends on the decay structure and the
operating region, as Section~\ref{sec:deps} sets out, but the discrepancy is not confined to the
participant that produced it.

A second amplification follows. Each settlement is absorbed into a running mean with a visit
count, so a discrepancy does not remain confined to the round in which it occurred: it becomes
part of a persistent estimate. The next declaration is drawn from a policy over that estimate, so
the discrepancy can re-emerge as a change in behavior---for the affected participant directly,
and for other participants indirectly, because their next settlements are computed from the
profile that behavior helps determine.

A single inconsistent value at one round can therefore alter the trajectory of the whole
population for the remainder of the horizon. None of this is visible at the level of message
transfer alone. Each of the four properties above can be violated while every message is
delivered exactly once, in order, with no retransmission and no loss.

\subsection{Experimental substrate}

The realization used for validation runs the aggregator and the $N$ participants as independent
operating-system processes in separate network namespaces, communicating through real forwarders
using Interest/Data exchange, with application traffic traversing an intermediate forwarder. The
implementation contains no non-CCN communication path: there is no filesystem channel, no shared
object and no direct call between the agents, so every boundary crossing in
Fig.~\ref{fig:arch} is a network exchange. We propose no CCN protocol, claim no advantage over
other substrates, and treat no forwarder behavior as a scientific result. Using it makes the validation of Section~\ref{sec:fid} a measurement on a running distributed
system rather than an argument about a design.

\subsection{From realization to criterion}

Delivery is necessary---without it the round does not close, the profile is undefined and no
settlement exists. But Section~\ref{sec:risk} establishes that delivery is not sufficient: the
properties on which the mechanism's correctness depends are invisible to any transport-level
measurement, and a violation of any of them can propagate through settlement to multiple
participants and through learning to later rounds. Correctness of the realization must therefore
be assessed at the semantic levels the mechanism itself induces. Section~\ref{sec:fid} defines
those levels and the criterion applied at each.

% ============================================================ V
\section{Semantic-Fidelity Validation Methodology}
\label{sec:fid}

\subsection{Fidelity is not an output comparison}

The mechanism does not compute a value at a single site; it evolves a state that no single node
holds in full.
Participant $i$ maintains its own estimate $\hat{u}_i$ and count $n_i$, which the aggregator
never observes, while the aggregator holds the payoff table and calibration, which no participant
observes. The property the mechanism is about---how the population's private preferences evolve out of the collapse initialization---is a property
of the joint evolution of these private states, not of
any message or any node. Two structural features make that evolution fragile under
decentralization, and both are the dependencies of Section~\ref{sec:deps}: settlement is jointly
coupled, so a discrepancy affecting one declaration can change the settlement of multiple
participants in the same round; and the coupling is carried forward by learning, so a discrepancy
is not dissipated but absorbed into persistent state.

Together these mean that a single mismatched value at round $r$ can alter the declarations of
multiple participants at round $r+1$ and at later rounds. Comparing terminal outcomes cannot
detect this, because distinct trajectories may terminate at the same profile.

\subsection{Definition}

Let $\mathrm{C}$ denote the centralized reference implementation and $\mathrm{D}$ the distributed
realization, initialized identically and driven by the same exogenous type draws and the same
per-participant random streams. For round $r$ write $\mathbf{a}(r)\in\{0,1,2\}^N$ for the joint
declaration profile and $\mathbf{w}(r)\in\mathbb{R}^N$ for the \emph{realized-settlement vector},
with $w_i(r)=U_i+R_i$ the quantity participant $i$ receives and learns from, as in
\eqref{eq:update}. Write $(\hat{u}_i(r),n_i(r))$ for participant $i$'s learner state after
that round's update.

\begin{definition}[Semantic fidelity over horizon $T$]\label{def:fid}
$\mathrm{D}$ is semantically faithful to $\mathrm{C}$ over horizon $T$ if, for every round $r<T$
and every participant $i$,
$\mathbf{a}^{\mathrm{D}}(r)=\mathbf{a}^{\mathrm{C}}(r)$,
$\mathbf{w}^{\mathrm{D}}(r)=\mathbf{w}^{\mathrm{C}}(r)$,
$\hat{u}^{\mathrm{D}}_i(r)=\hat{u}^{\mathrm{C}}_i(r)$ and
$n^{\mathrm{D}}_i(r)=n^{\mathrm{C}}_i(r)$, with equality holding exactly.
\end{definition}

We use \emph{semantic fidelity} rather than \emph{semantic preservation} because the latter
denotes a machine-checked guarantee in compiler verification~\cite{leroy09} and would suggest a
stronger claim than the empirical one made here. The vector $\mathbf{w}$ collects the
per-participant realized settlements of \eqref{eq:update}; it is written $\mathbf{w}$ rather
than $\mathbf{s}$ because $s_i$ already denotes verified shortfall in \eqref{eq:short-i}.

Definition~\ref{def:fid} is a conjunction over rounds, not a statement about round $T-1$. That
distinction is the methodological content: for a mechanism whose errors are both persistent and
contagious across agents, fidelity is a per-round invariant, and any criterion evaluated only at
the end of the horizon is insufficient by construction.

\subsection{Validation hierarchy}
\label{sec:levels}

For reporting we organize the comparison into five validation layers, following the dependency
chain. L1--L4 compare the semantic objects of Definition~\ref{def:fid} directly; L5 is a derived
trajectory-level check included for interpretability. For each adjacent pair we give a concrete
way in which agreement at one layer can hold while agreement at the next fails. The layers are an
interpretable decomposition of what a round touches, not five independent proofs and not a proved
hierarchy: the negative control of Section~\ref{sec:negctl} is detected by four of them in the
same round.

\textbf{L1---Declaration fidelity.} The declaration each participant selects in each round. A
mismatch means the participant's policy or its view of the round differs. L1 does not imply L2:
correct individual declarations can still be assembled into an incorrect profile if a stale or
mis-attributed declaration is admitted.

\textbf{L2---Joint-profile fidelity.} The profile the aggregator assembles before settling. A
mismatch means the population's realized joint declaration differs, which can change the
settlement of multiple participants even where all declarations were individually correct. L2
does not imply L3: a correct profile can still be settled against the wrong table entry or the
wrong participant type index.

\textbf{L3---Settlement fidelity.} The per-participant realized-settlement vector. A mismatch
means at least one participant is paid a value the mechanism does not specify. L3 does not imply
L4: a correct settlement can be applied to the wrong (type, declaration) counter.

\textbf{L4---Learner-state fidelity.} Each participant's estimate and visit count after the
update. A mismatch means the divergence has entered persistent state and will be re-expressed in
later rounds. L4 at round $r$ does not imply L4 at $r+1$; only closure over all rounds gives
trajectory fidelity.

\textbf{L5---Trajectory and transition fidelity.} The sequence of coarse preference levels over
the horizon, and within it the round of first passage from $m=0$ to $m\ge1$.

\subsection{Why successful message delivery is not sufficient}
\label{sec:transport}

Transport-level success establishes that bytes arrived; it does not establish that the mechanism
was executed. Under the architecture of Section~\ref{sec:arch}, each of the following is
consistent with complete delivery, no retransmission and no timeout, and each violates
Definition~\ref{def:fid}: a consumer admits information belonging to a different round; a
declaration is attributed to the wrong participant when the profile is assembled; the round is
closed on an inconsistent set of declarations; a participant-specific settlement is delivered to
a different participant; a correct settlement is applied to the wrong entry of a participant's
local estimate. None of these is detectable by counting messages, and none requires a network fault.
\emph{Transport success and semantic fidelity are independent properties, and only the second is
evidence that the mechanism was executed.}

\subsection{Exact equality rather than statistical agreement}

Because both implementations are driven by identical exogenous draws, the mechanism is
deterministic given those draws, and the comparison criterion is equality rather than
distributional agreement. The substance of the comparison is that it is made directly, per
round, on four quantities---each participant's declaration, the joint profile the aggregator
assembles, the participant-specific settlement, and each participant's learner state after its
update---rather than on any summary of them. Equality rather than agreement within a band is
used because a statistical criterion would admit each of the failure modes of
Section~\ref{sec:transport} at low rates; that the comparison carries no tolerance to select is
a property of the criterion rather than its main content. No numerical tolerance is used
anywhere in the comparison reported in Section~\ref{sec:dist}.

Under a correct implementation with fixed randomness, exact equality is the expected outcome, and
we do not present it as surprising. What the comparison contributes is where it is taken: on the
mechanism-level state whose preservation is required for subsequent adaptive behavior, rather
than inferred from successful transport or from a terminal outcome. Section~\ref{sec:negctl}
reports a controlled negative control showing that an execution in which every message is
delivered successfully can nevertheless violate these comparisons. Exact equality over one
configuration is not a claim of equivalence in general: it establishes fidelity for the canonical
operating point, one seed, $N=5$, and the conditions under which the comparison was run.

\subsection{The qualitative transition as an integrity check}
\label{sec:integrity}

If Definition~\ref{def:fid} holds over the horizon, agreement on the transition round follows; it
is not independent evidence. Its role is interpretive. The transition is the threshold crossing
of a slowly-moving statistic late in the horizon, so any divergence at any earlier round would
displace it. Reporting the round at which it occurs therefore compresses a full-horizon
comparison into a single checkable number, and it demonstrates that the validated interval
contains the late coarse-preference first passage rather than only a long quiet prefix.

% ============================================================ VI
\section{Adaptive-Participation Results}
\label{sec:results}

Unless stated otherwise: $s_0=s_0^{\mathrm{entry,any}}=0.199928$, $\beta=4$, $8000$ rounds, $96$
seeds, initialization as in Section~\ref{sec:model}. Settlements are the \emph{realized} episode
values: a round draws one episode seed and every unit is paid what it earned in that episode, so
the day's common shock is intact.

\subsection{The two structures separate}

Convergence to full participation is the event that $\arg\max_a \hat{u}_{i,t}(a)$ is the truthful
item for every $(i,t)$ at the end of the run; rates carry $95\%$ Wilson intervals. From the
collapse initialization, convergence within the $8000$-round canonical horizon is $0.00$
$[0.000,0.038]$ with no corrective transfer \emph{and} under the linear structure, and $1.00$
$[0.962,1.000]$ under the thresholded one (Table~\ref{tab:head}). These are finite-horizon
rates; Section~\ref{sec:feedback} reports the long-horizon control. From a random start all three are $1.00$ $[0.962,1.000]$: the intervals
separate completely from the collapse initialization and coincide exactly away from it, so the difference is not a
general performance gap---it is specific to departure from that basin.

Two transfers that Theorem~\ref{th:inv} and \eqref{eq:thresholds} make indistinguishable under
all five static criteria therefore lead adaptive units to different outcomes within the
canonical horizon.

Three distinct notions appear below and are not interchangeable: the \emph{full-participation
convergence criterion} just stated, which is the basis of every rate in this section and does
not involve $m$; the \emph{coarse preference level} $m(r)$ of \eqref{eq:mdef}, a trajectory
statistic; and the \emph{realized declarations} actually made in a round, which are what the
settlement is computed from.

\begin{table}[t]
\caption{On-path payoff equivalent, different under learning within the canonical horizon.
Convergence is to full participation---$\arg\max_a\hat{u}_{i,t}(a)$ the truthful item for every
$(i,t)$---from the collapse initialization at $s_0=s_0^{\mathrm{entry,any}}$, $96$ seeds,
$8000$ rounds. Rates are finite-horizon; see Section~\ref{sec:feedback}.}
\label{tab:head}
\centering
\begin{tabular}{lll}
\toprule
settlement & static criteria & convergence\\
\midrule
no transfer & --- & $0.00$\\
linear & as no transfer & $0.00$\\
thresholded & as no transfer, and as linear & $1.00$\\
\bottomrule
\end{tabular}
\end{table}

\subsection{Where the difference lives}
\label{sec:where}

Table~\ref{tab:rungs} gives the partial-participation payoffs $\Delta^{\mathrm{join}}(j)$, the
expected payoff to a unit joining when $j$ other units already participate. Without a corrective
transfer only the $j=0$ row is negative: both flexibility states lose by joining alone and gain
at every later level. That single row \emph{is} the zero-participation equilibrium---the barrier
is the first step and nothing else.

\begin{table}[t]
\caption{Expected payoff $\Delta^{\mathrm{join}}(j)$ to a unit joining when $j$ other units
already participate, \$ per event day, at $s_0=s_0^{\mathrm{entry,any}}$. The index $j$ is
distinct from the coarse preference level $m(r)$ of \eqref{eq:mdef}. Normal state; the stressed
state has the same sign pattern, its no-transfer column running from $-0.003253$ at $j=0$ to
$+0.041203$ at $j=4$.}
\label{tab:rungs}
\centering
\begin{tabular}{lrrr}
\toprule
$j$ other units & no transfer & linear & thresholded\\
\midrule
$0$ & $\mathbf{-0.048927}$ & $0.151001$ & $0.151001$\\
$1$ & $0.066963$ & $0.205802$ & $\mathbf{0.266892}$\\
$2$ & $0.066042$ & $0.143792$ & $\mathbf{0.265971}$\\
$3$ & $0.062823$ & $0.079484$ & $\mathbf{0.237761}$\\
$4$ & $0.061983$ & $0.061983$ & $0.061983$\\
\bottomrule
\end{tabular}
\end{table}

\emph{Why the $j=0$ entry differs from Table~\ref{tab:ell}.} The no-transfer payoff at $j=0$ in
Table~\ref{tab:rungs} is $-0.048927$, whereas the corresponding single-participant loss in
Table~\ref{tab:ell} is $0.048305$. The two are measured on different estimands by design.
Equation~\eqref{eq:ell} is evaluated on the one unit that every deviation counterfactual is also
evaluated on, matched on the event-day seed, because a deviation and the truthful baseline it is
compared against must be read on the same unit. Table~\ref{tab:rungs} instead pools over all five
units, which is the average a joining unit faces and therefore the quantity the learning process
sees. The two conventions differ in the fourth decimal; recomputed on the designated unit, the
$j=0$ row of Table~\ref{tab:rungs} reproduces the thresholds of \eqref{eq:thresholds} to
$10^{-8}$. They are not two estimates of a single quantity.

The two structures agree at both ends and differ only in between, which is where the statistic
used below is taken:
\begin{equation}
\min_{1\le j\le N-2}\Delta^{\mathrm{join}}(j)=
\begin{cases}
0.079484, & f=f^{\mathrm{lin}},\\
0.237761, & f=f^{\mathrm{thr}}.
\end{cases}
\label{eq:minrung}
\end{equation}
At $j=0$ both pay $s_0$ because $f(0)=1$; at $j=N-1$ both switch off because the others already
meet $\bar{Q}$. The static criteria are evaluated at exactly those two points, which is why they
cannot separate \eqref{eq:minrung}.

Neither structure is increasing in $j$ throughout: the thresholded payoffs \emph{ease} from
$0.266892$ at $j=1$ to $0.237761$ at $j=3$. What distinguishes the two is the \emph{level} of the
partial-participation payoffs, not their monotonicity---a thresholded structure does not work
because it rewards later entrants more, but because it keeps every intermediate payoff high.

That level must be measured against the learner's own estimate of abstaining. With the margin
$\delta(j,k)=\Delta^{\mathrm{join}}(j)-\hat{u}^{(k)}(\mathrm{out})$ from \eqref{eq:decay}, at
$k=0$ the linear structure clears the incumbent by $0.005802$ at $j=1$ and is \emph{below} it by
$0.056208$ at $j=2$, whereas the thresholded structure clears it by $0.066892$, $0.065971$ and
$0.037761$ at $j=1,2,3$. A negative margin does not say a state cannot be reached; it says what a
unit earns there is weak against the estimate it already holds, so observations there do not
sustain the participating estimate. That is a statement about \emph{retention}.

\subsection{Arrival, retention, and propagation}
\label{sec:retention}

These are three events, not one, and the statistic used to separate them must be stated
precisely. Let
\begin{equation}
k(r) = \textstyle\sum_{i,t}\mathbf{1}\big\{\arg\max_a \hat{u}_{i,t}(a)\neq\text{abstain}\big\}
\in\{0,\dots,2N\}
\label{eq:kdef}
\end{equation}
count the \emph{unit--flexibility-state learner cells} whose current $\arg\max$ favors a
participating declaration; each unit contributes $0$, $1$ or $2$ cells. The frozen analysis
tracks the \emph{coarse preference level}
\begin{equation}
m(r) = \lfloor k(r)/2 \rfloor \in \{0,\dots,N\}.
\label{eq:mdef}
\end{equation}

The integer division makes $m$ a coarse level derived from the $2N=10$ type-specific preference
cells, and not a count of units or of realized declarations. Its endpoints are asymmetric:
$m=5$ holds if and only if all ten cells favor participation, whereas $m=0$ admits $k=0$
\emph{or} $k=1$ and therefore does not imply that no cell, and still less that no unit, favors
participation---nor does it constrain the declarations actually made in that round. We attach no
stronger reading to the intermediate levels than \eqref{eq:mdef} supports.

Fig.~\ref{fig:escape}(a) reports, for each of the $48$ tracked seeds, whether a run has
\emph{reached} a given $m$-level by round $r$---the running maximum of $m$.

Every seed of both structures reaches $m=1$ within a median of one round, which
Table~\ref{tab:rungs} predicts, since both pay $0.151001$ at $j=0$. Beyond that they part: the
linear structure reaches $m=2$ in $33$ of $48$ seeds and $m=5$ in none within the horizon; the
thresholded structure reaches $m=2$ in all $48$ and $m=5$ in all $48$, within a median of $159$
rounds.

Reaching a level is not holding one, and that is where the finding is. Table~\ref{tab:dwell}
shows the linear structure spending $0.004$ of its rounds at $m=2$ and $0.000$ at $m=3$: it
reaches the second level and does not keep it. That is what a negative $\delta(2,k)$
describes---not a level the population cannot visit, but one whose payoff does not sustain the
participating estimate once visited. The thresholded structure, whose margins stay positive
throughout, spends $0.882$ of its rounds at $m=5$, which by \eqref{eq:mdef} is exactly the state
in which every unit favors participation in both flexibility states.

\begin{table}[t]
\caption{Share of rounds at each coarse preference level $m$ of \eqref{eq:mdef}, from the
collapse initialization, $s_0=s_0^{\mathrm{entry,any}}$, $48$ tracked seeds, $8000$ rounds.
Occupancy, not first passage; Fig.~\ref{fig:escape}(a) reports which $m$-levels were reached at
all. $m$ is a coarse level over the ten type-specific preference cells, not a count of
participating units or of realized declarations.}
\label{tab:dwell}
\centering
\begin{tabular}{lrrrrrr}
\toprule
 & $0$ & $1$ & $2$ & $3$ & $4$ & $5$\\
\midrule
linear & $\mathbf{0.887}$ & $0.108$ & $0.004$ & $0.000$ & $0.000$ & $0.000$\\
thresholded & $0.000$ & $0.001$ & $0.008$ & $0.035$ & $0.074$ & $\mathbf{0.882}$\\
\bottomrule
\end{tabular}
\end{table}

\begin{figure*}[t]
\centering
\includegraphics[width=\textwidth]{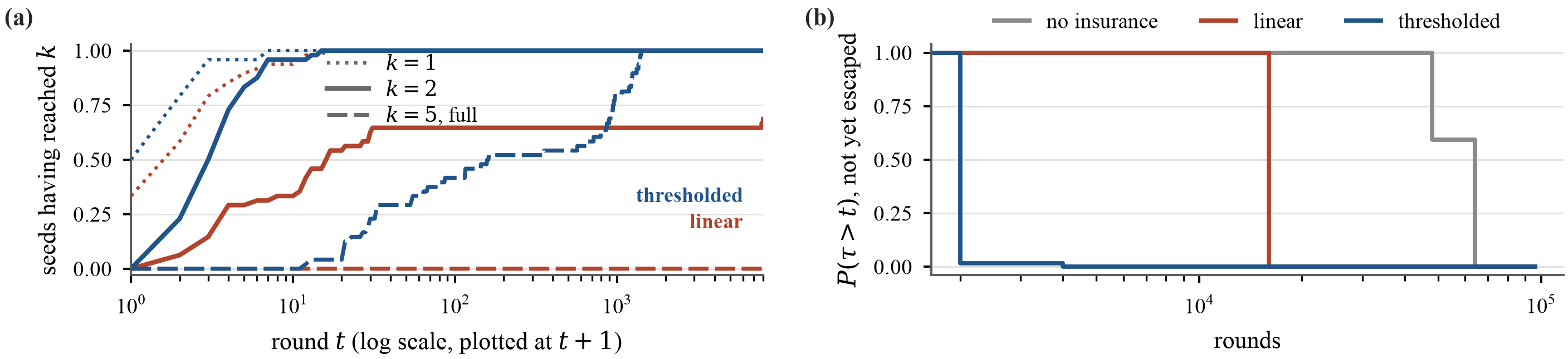}
\caption{(a) Fraction of the $48$ tracked seeds whose running maximum of the coarse preference
level $m$ of \eqref{eq:mdef} has reached a given $m$-level by round $r$, on a logarithmic round
axis,
under the realized estimator. Both structures reach $m=1$ rapidly; they separate beyond it, the
thresholded structure reaching $m=5$ in every tracked seed and the linear structure in none
within the horizon. (b) First-passage behavior out of the collapse initialization as a survival
curve at $\beta=4$ over $64$ seeds. The legend label \emph{no insurance} denotes the mechanism
with no corrective transfer.}
\label{fig:escape}
\end{figure*}

\subsection{The threshold ordering reverses}

Write $s_0^{\mathrm{L}}$ for the smallest scale on the swept grid at which the Wilson lower bound
on convergence reaches $0.9$ at an $8000$-round horizon. Against the static thresholds
\eqref{eq:thresholds},
\begin{equation}
\begin{aligned}
\text{thresholded:}\quad & s_0^{\mathrm{elim}} < s_0^{\mathrm{L}} < s_0^{\mathrm{entry,any}},
  && s_0^{\mathrm{L}}=0.12,\\
\text{linear:}\quad & s_0^{\mathrm{elim}} < s_0^{\mathrm{entry,any}} < s_0^{\mathrm{L}},
  && s_0^{\mathrm{L}}=0.35.
\end{aligned}
\label{eq:order}
\end{equation}
The thresholded structure reaches the population \emph{before} the scale is large enough to make
a single entrant whole; the linear structure makes a single entrant whole long before it reaches
the population. Within the canonical horizon the linear structure did not carry the population to full
participation even at $s_0^{\mathrm{entry,any}}$, about sixty-four times
$s_0^{\mathrm{elim}}$. The static constant is
therefore not a conservative proxy for the dynamic one.

\subsection{The separation is a property of the feedback structure}
\label{sec:feedback}

Fig.~\ref{fig:escape}(b) shows that what differs is the time a population takes, not what it can
eventually reach: at a $96\,000$-round horizon every condition converges, including the mechanism
with no corrective transfer. The comparison is repeated under regret
matching~\cite{hartmascolell00} on the same own-payoff estimator and in the rule's
full-information form, where every declaration's counterfactual payoff is supplied. Between the
two, convergence for the linear structure goes from $0.20$ $[0.13,0.29]$ to $1.00$, while the
thresholded structure is at $1.00$ either way. \emph{The separation closes entirely under full
information.} It is a statement about what an experience-based learner can discover, not about
the incentive as an object.

\subsection{How wide the window is}
\label{sec:window}

The separation is not an artifact of one initialization. Sweeping the incumbent value: at $0.10$
the linear structure has margin to spare and retains everyone; by $0.15$ its convergence rate has
fallen to $0.39$; by $0.20$, where $\delta(1,0)=0.005802$, it is zero. The lower edge lies between
$0.10$ and $0.15$, which is where the margin says it should.

\subsection{A one-parameter family} \label{sec:family}  Two structures are two points. Varying the profile along one dimension and nothing else,
\begin{equation}
f_\gamma\big(Q_{-i}\big)=\Big(1-\tfrac{Q_{-i}}{\bar{Q}}\Big)_{+}^{\gamma},
\label{eq:family}
\end{equation}
nine members are tested: $\gamma\in\{0.10,0.25,0.35,0.50,0.60,0.70,0.85,1.00\}$ together with the
thresholded structure, the limiting member as $\gamma\to 0$. Every member satisfies
Assumption~\ref{as:class}. Convergence is monotone in $\min_j\Delta^{\mathrm{join}}(j)$ across all
nine with no inversion: as the minimum falls from $0.237761$ to $0.079484$ the convergence rate
falls from $1.00$ to zero and never rises (Fig.~\ref{fig:gamma}). The transition is sharp---between $\gamma=0.5$ and $\gamma=0.6$---and sits at a minimum between
$0.104649$ and $0.115688$. The statistic ranks; it does not, by itself, locate.  This family
cannot identify \emph{which} summary of the partial-participation payoffs is responsible, and we
do not claim that it does. The family is pointwise monotone in $\gamma$: for every pair of
members, one member's intermediate payoffs dominate the other's at every $j$, without exception
across all thirty-six pairs. Any statistic that is monotone in those payoffs therefore induces the
same ranking---the uniform mean does, every non-negative weighting of the intermediate payoffs we
examined does, and so does $\gamma$ itself. What \eqref{eq:minrung} provides is a descriptor of
the partial-participation region that is computable before any simulation and that ranks these
structures correctly; separating it from a weighted-average alternative would require a family
whose members are not pointwise ordered, which we do not construct here.

\begin{figure}[t]
\centering
\includegraphics[width=\columnwidth]{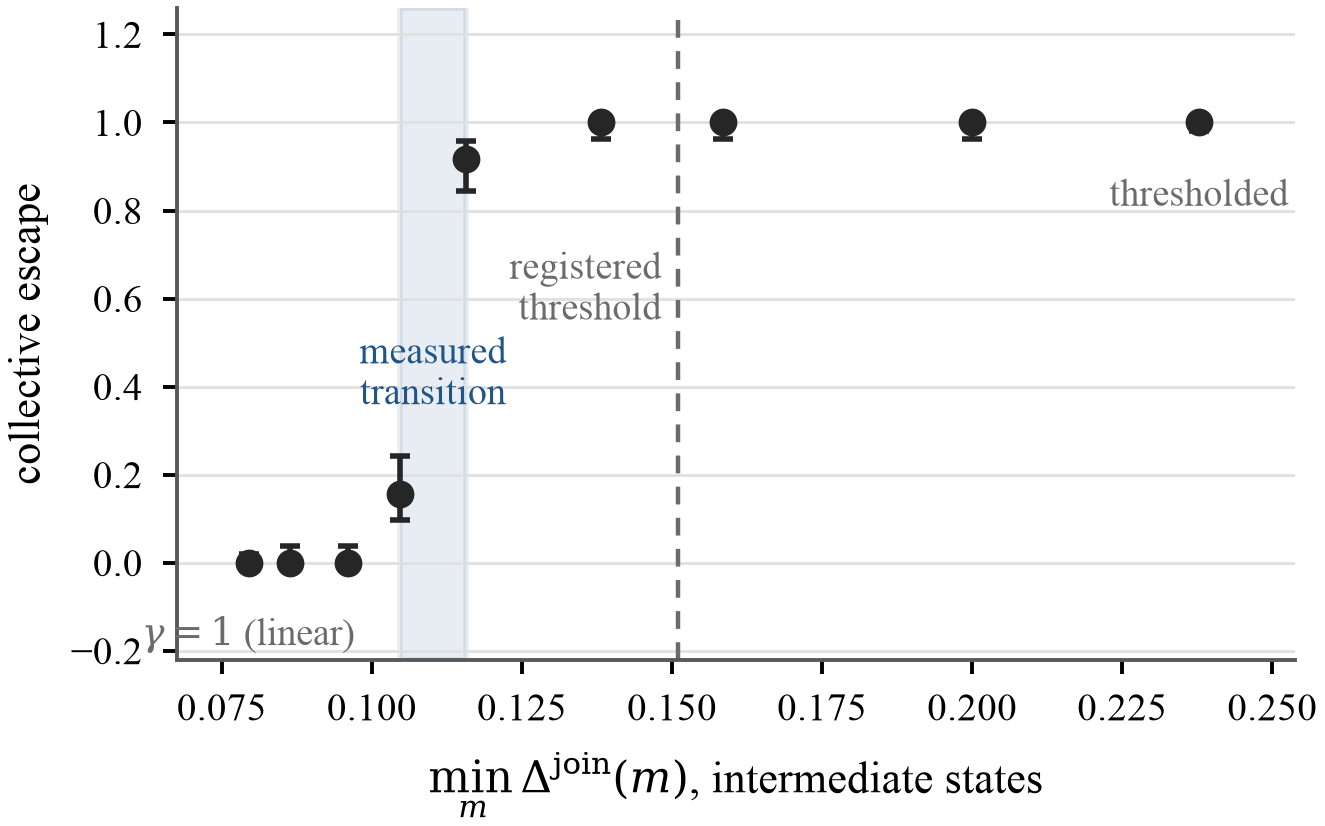}
\caption{Convergence to full participation within the canonical horizon against the minimum
partial-participation payoff $\min_j\Delta^{\mathrm{join}}(j)$, over the nine members of
\eqref{eq:family}. The ordering is monotone with no inversion. Each marker carries the $95\%$ Wilson interval of its convergence rate over the same $96$ seeds,
computed for all nine members by the same rule; the intervals show finite-sample uncertainty and
are not used to establish the ordering. The ordinate label \emph{collective escape} denotes the
same full-participation convergence criterion used in Table~\ref{tab:head}; in the abscissa
label, $m$ corresponds to the partial-participation index $j$ in Table~\ref{tab:rungs}.}
\label{fig:gamma}
\end{figure}

\subsection{What does not separate them, and budget feasibility}
\label{sec:noise}

Adding zero-mean noise to the realized settlement each unit observes and sweeping its standard
deviation over $[0.16,0.50]$---that is, from $0.96$ to $3.01$ times the natural payoff standard
deviation---degrades convergence for every condition, but by indistinguishable amounts.
Reward-noise tolerance is therefore not an instrument that distinguishes these two structures.
This sweep is computed under a second canonical construction, the \emph{profile-mean} estimator,
in which a unit is paid the mean settlement of its (profile, type) cell; the distributed
validation of Section~\ref{sec:dist} reproduces that same construction at zero added noise.

Writing $Q_t$ for the reduction delivered in round $t$ and $v$ for its value per kWh, the
aggregator's net cash flow is
\begin{equation}
\Pi^{\mathrm{agg}}_t = v\,Q_t - \sum_i \underbrace{\Big(b_i - \Pi\,s_{i,t} + \kappa\,D^z_{i,t}
 + s_0 f_\gamma\big(Q_{-i,t}\big)\Big)}_{P_{i,t}+R_{i,t}} .
\label{eq:budget}
\end{equation}
All nine members clear ex-post feasibility: $v^{\min}$ runs from $0.14079$ to $0.15942$, in every
case under a sixth of the shortfall price $\lambda^-=1.00$~\$/kWh, and $\Pi^{\mathrm{agg}}_t\ge0$
holds in every one of the $6\,912\,000$ rounds measured---$96$ seeds $\times$ $8000$ rounds
$\times$ nine members---at $v^{\mathrm{expost}}=0.30844$. The separation is consequently not an
artifact of aggregator deficit.

% ============================================================ VII
\section{Distributed Semantic-Fidelity Validation}
\label{sec:dist}

\subsection{Configuration}

The distributed realization of Section~\ref{sec:arch} and the centralized reference were executed
at the canonical operating point (\,$\Pi=1.3925$, $b_C=0.5849$, $b_A=0.6772$, linear transfer,
$s_0=0.19993$\,), with $N=5$, seed $12345$, collapse initialization, and a horizon of $T=8000$
rounds---the canonical horizon of Section~\ref{sec:results}. The two executions were driven by
the same exogenous type schedule and the same per-participant random substreams, so the mechanism
is deterministic given those draws.

\emph{Which canonical construction is reproduced.} Section~\ref{sec:results} evaluates the
mechanism under two canonical settlement constructions. Tables~\ref{tab:head}--\ref{tab:dwell} and
Fig.~\ref{fig:escape} use the \emph{realized} estimator, in which each round draws one episode
seed and every unit is paid what it earned on that day. The distributed validation reproduces the
\emph{profile-mean} estimator at zero added reward noise, in which a unit is paid the mean
settlement of its (profile, type) cell. Both use the same coarse preference level $m(r)$ of
\eqref{eq:mdef} and the same first-passage criterion $m\ge1$; they differ only in the settlement
that drives learning, and consequently in how quickly $m$ first reaches $1$. The two
first-passage results are therefore not comparable and are not in conflict:
Fig.~\ref{fig:escape}(a) reports first passage under the realized estimator, and the round-$7538$
first passage reported below is under the profile-mean estimator. Round $7538$ is not the
corresponding time in Fig.~\ref{fig:escape}(a).

The success criteria, the transition definition, and the classification of every anticipated
failure mode were registered before the run and were not modified afterwards.

\emph{What is compared.} The quantity settled, published, received and learned from is
$w_i=U_i+R_i$---the utility of \eqref{eq:utility} plus the corrective transfer of
\eqref{eq:transfer}---evaluated on the profile-mean cell for the round's joint profile and type
vector. It is neither the payment $P_i$ nor the transfer $R_i$ alone.

\subsection{Primary result}

At the canonical operating point, the distributed realization reproduced the centralized
reference over the complete $8000$-round horizon with exact equality at every fidelity level of
Section~\ref{sec:levels}. Across $8000$ rounds there were no joint-declaration mismatches, the
maximum absolute settlement difference was $0.0$, there were no coarse-preference-level mismatches,
the maximum absolute learner differences were $0.0$ for $\hat{u}$ and $0$ for $n$, and the
terminal $\arg\max$ was identical. The first passage of the frozen coarse preference statistic from $m=0$ to $m\ge1$ occurred at
round $7538$ in both executions. No numerical tolerance was introduced at any point in the
comparison. Table~\ref{tab:fidelity} states the evidence.

\subsection{Results by fidelity level}

\textbf{L1---Declaration fidelity.} Each participant's own record of the declaration it selected
was compared against the centralized reference for the same round and participant: $40\,000$
comparisons ($8000$ rounds $\times$ $5$ participants), zero mismatches. The same records were
compared independently against the declaration the aggregator admitted into that round's
profile---two accounts written by different processes on different hosts---again with zero
mismatches in $40\,000$ comparisons. This is a participant-side measurement and is independent of
the profile comparison that follows.

\textbf{L2---Joint-profile fidelity.} The profile assembled by the aggregator was identical to the
reference profile in all $8000$ rounds. Because settlement is indexed by the profile, this is the
level at which a single mis-attributed declaration could have changed the settlement of multiple
participants; it did not occur in any round.

\textbf{L3---Settlement fidelity.} The maximum absolute difference between the distributed and
reference realized-settlement vectors, over all $8000$ rounds and all five participants, was
$0.0$---exact equality, not agreement within a tolerance. Every participant-specific settlement
value was therefore exactly reproduced.

\textbf{L4---Learner-state fidelity.} Each participant's estimate and count after every one of its
$8000$ updates were compared entry by entry against the reference state after the same round:
$240\,000$ scalar comparisons for $\hat{u}$ and $240\,000$ for $n$, with zero mismatching entries
and maximum absolute differences of $0.0$ and $0$. The terminal $\arg\max$ was identical,
$[[0,0],[2,0],[2,0],[2,0],[0,0]]$, in both executions.

\textbf{L5---Trajectory and transition fidelity.} The coarse preference level $m(r)$ of
\eqref{eq:mdef} agreed at every round: zero mismatches over $8000$ rounds. Within that, the
first passage of $m$ from $0$ to $1$---which is neither convergence to full participation nor the
onset of participation---occurred at round $7538$ in both executions, and the $200$-round
verification window registered in advance (rounds $7538$--$7737$) contained no mismatch at any
level.

Definition~\ref{def:fid} is a conjunction over rounds; it is satisfied over the full canonical
horizon for this configuration.

\begin{table}[t]
\caption{Exact semantic-fidelity result, distributed versus centralized execution, canonical
operating point, profile-mean estimator, $N=5$, seed $12345$.}
\label{tab:fidelity}
\centering
\begin{tabular}{lr}
\toprule
validation horizon & $8000$ rounds\\
participants $N$ & $5$\\
comparison tolerance & \textbf{none (exact equality)}\\
\midrule
\multicolumn{2}{l}{\emph{Semantic fidelity}}\\
L1 participant declaration vs.\ centralized & $\mathbf{0}$ of $40\,000$\\
L1 participant declaration vs.\ aggregator & $\mathbf{0}$ of $40\,000$\\
L2 joint-profile mismatches & $\mathbf{0}$ of $8\,000$\\
L3 settlement mismatches & $\mathbf{0}$ of $40\,000$\\
L3 settlement max abs.\ difference & $\mathbf{0.0}$\\
L4 learner $\hat{u}$, per-round mismatching entries & $\mathbf{0}$ of $240\,000$\\
L4 learner $n$, per-round mismatching entries & $\mathbf{0}$ of $240\,000$\\
L4 terminal $\arg\max$ equality & \textbf{identical}\\
L5 coarse-preference-level mismatches & $\mathbf{0}$ of $8\,000$\\
L5 first-passage round, centralized & $\mathbf{7538}$\\
L5 first-passage round, distributed & $\mathbf{7538}$\\
\midrule
\multicolumn{2}{l}{\emph{Message plane} (not fidelity evidence)}\\
aggregator fetches, successful & $40\,000$\\
application-level retries & $0$\\
aborts & $0$\\
forwarders alive at completion & $6$ of $6$\\
elapsed / per round & $15.76$~h / $7.09$~s\\
\bottomrule
\end{tabular}
\end{table}

\subsection{The coarse-preference first passage}

The first passage of $m$ is the qualitative event in the frozen trajectory statistic, and it lies
late in the horizon. In the reference, $m$ remains $0$ through round $7537$---where $k=1$, the
settled $\arg\max$ is $[[0,0],[0,0],[0,0],[2,0],[0,0]]$ and the joint declaration is
$21011$---and reaches $m=1$ at round $7538$, where $k=2$ as participant $p_2$'s normal-state
$\arg\max$ moves from abstain to the aggressive item. The distributed execution reproduced the
same first passage at the same round, produced by the same participant's same cell change, and
then remained identical through round $7999$, covering the $462$ rounds that follow it
(Fig.~\ref{fig:trajectory}).

Because $m=\lfloor k/2\rfloor$, round $7538$ is the round at which the \emph{second} preference
cell turns over, not the first. In this trajectory the first cell turns over at round $4625$,
and realized participating declarations occur from the first round onward---at round $7536$, for
instance, all five units declared participation. Round $7538$ is therefore a first-passage event
of the frozen coarse statistic, not the onset of physical participation. We report it because it
is the event the frozen analysis tracks and because it is the point at which the two executions
could most visibly have diverged.

For context, and not as a further result: under the profile-mean estimator at this operating
point and decay structure, the canonical $96$-seed analysis records a median first-passage round of
$7537$, with $93$ of $96$ seeds reaching $m\ge1$ within the horizon and a modal $m$ of zero. The validated seed is therefore not an anomalously late-transitioning example of
its own condition. This is a statement about where the validated trajectory sits among comparable
trajectories; it is not a replication of the canonical median, and the single distributed run
carries no statistical weight for the $96$-seed distribution.

As Section~\ref{sec:integrity} states, the agreement between the two executions on the transition
round is not statistically independent evidence: given Definition~\ref{def:fid} over the horizon,
it follows. Its value is interpretive---it demonstrates that the validated interval contains the
late coarse-preference first passage rather than only a long quiet prefix in which the population never left
the basin.

\begin{figure}[t]
\centering
\includegraphics[width=\columnwidth]{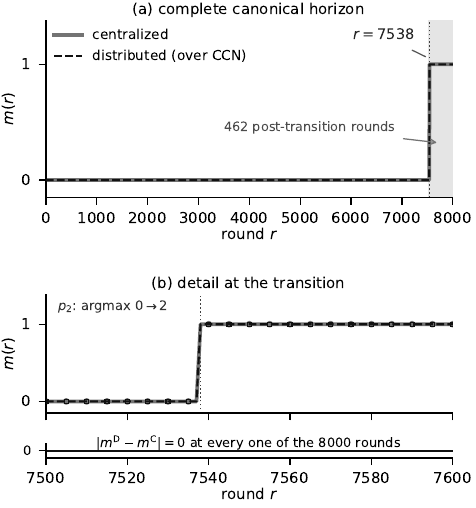}
\caption{Exact reproduction of the canonical coarse-preference trajectory by the distributed
realization, under the profile-mean estimator. (a) Coarse preference level $m(r)$ of \eqref{eq:mdef} over the
full $8000$-round canonical horizon for the centralized reference (solid) and the distributed
execution over CCN (dashed); the traces are exactly equal at every round, so the dashed trace
lies on the solid one throughout. $m$ remains $0$ until round $7538$ and equals $1$ for the remaining $462$ rounds; $m=0$ does not
mean that no unit favors, or makes, a participating declaration. (b) Detail around the transition,
with a per-round discrepancy strip that is identically zero. Agreement on the transition round is
implied by full-trajectory equality rather than independent of it; it is shown for
interpretability, to demonstrate that the validated interval contains the late coarse-preference
first passage tracked by the frozen analysis, and not as independent statistical evidence.}
\label{fig:trajectory}
\end{figure}

\subsection{The message plane}

All $40\,000$ aggregator fetches succeeded, with zero application-level retries and zero aborts;
all six forwarders were alive at completion; the run occupied $15.76$~h at $7.09$~s per round.
These statistics are reported for reproducibility and to establish that the comparison was
obtained over an active distributed execution. Consistent with Section~\ref{sec:transport}, they
are not evidence of semantic fidelity: every failure mode enumerated there is compatible with
complete delivery and no retransmission. Transport success and semantic fidelity are reported
separately here because they are separate properties.

\subsection{Fidelity under communication impairment}

A separate, shorter validation examined whether semantic fidelity survives degraded communication
when information delivery ultimately succeeds. Under six impairment conditions applied to every
link and verified in the emulated queueing discipline at runtime---up to $200$~ms delay, $40$~ms
jitter, $15\%$ loss and $0.5$~Mbit/s---the round time rose by up to $3.45\times$ and $371$
application-level timeouts occurred over a $50$-round horizon, yet all $750$ fetches eventually
succeeded with no aborts and the joint declaration, settlement, learner state and
coarse-preference trajectory were exactly equal to the reference (Table~\ref{tab:impair}).

This isolates the scope of the claim: impairment changes when a settlement arrives, not what it
contains, and the mechanism as specified in Section~\ref{sec:model} assumes each round's
settlement is available before the next update. Fidelity here is therefore a statement about the
realization under conditions in which delivery ultimately succeeds. It is not a resilience
mechanism, not a guarantee, and not a property of the substrate; behavior under permanently
unavailable settlement would require model semantics that Section~\ref{sec:model} does not
define, and is left outside this study.

\begin{table}[t]
\caption{Fidelity under impairment, $N=5$, seed $12345$. Every condition reproduces the reference
exactly; no tolerance is applied.}
\label{tab:impair}
\centering
\begin{tabular}{lrrrr}
\toprule
condition & s/round & timeouts & fetches OK & $\hat{u}$ max diff\\
\midrule
ideal ($10$ r)      & $7.40$  & $0$   & $150/150$ & $0.0$\\
bandwidth           & $7.40$  & $0$   & $150/150$ & $0.0$\\
delay + jitter      & $7.80$  & $0$   & $150/150$ & $0.0$\\
combined            & $8.40$  & $1$   & $150/150$ & $0.0$\\
loss                & $10.40$ & $10$  & $150/150$ & $0.0$\\
severe              & $24.20$ & $67$  & $150/150$ & $0.0$\\
\midrule
ideal ($50$ r)      & $7.16$  & $0$   & $750/750$ & $0.0$\\
severe ($50$ r)     & $24.68$ & $371$ & $750/750$ & $0.0$\\
\bottomrule
\end{tabular}
\end{table}

\subsection{Negative control: transport success without semantic fidelity}
\label{sec:negctl}

The result above is what a correct implementation driven by fixed randomness should produce. To
show that the comparison could have failed, we ran one preregistered negative control, separate
from the canonical run and reported here only as a property of the validation method.

At a round fixed in advance by a mechanical rule---the first round $r\ge10$ at which
participants $1$ and $2$ declare different actions, which is round $10$---the aggregator
exchanged the attribution of two declarations it had already fetched successfully, immediately
before assembling the joint profile. Nothing else was altered: no message was dropped, delayed,
retried or malformed, and the injected round's five declaration fetches all succeeded on their
first attempt. Over the $30$-round horizon the message plane recorded $450$ retrievals, $330$
publications, and \textbf{zero} retries, timeouts and aborts.

The semantic comparisons of Section~\ref{sec:levels} nevertheless separated the two executions
(Table~\ref{tab:negctl}). Four of them diverged, all first at the injected round: the
participant-versus-aggregator declaration comparison, the joint profile ($02112$ against
$01212$), the settlement vector, and the post-update estimate $\hat{u}$. Only the estimate
divergence persisted; it was still present at the end of the horizon. Three comparisons did not
separate the executions, and each for a reason worth stating. The participant-versus-centralized
declaration comparison did not, because no participant's own behavior was altered---which is
what makes the fault an attribution error rather than a behavioral one. The visit counts $n$ did
not, because each participant still updated its own (type, declaration) counter; only the value
written into it changed. And the coarse level $m$ did not, because within this early window no
$\arg\max$ turned over. One settlement entry also matched by coincidence: the abstaining
participant is paid $0.0$ under both profiles, which is a concrete instance of equal values
masking an attribution error, and is the reason settlement agreement is not treated as evidence
of correct attribution anywhere in this paper.

The control establishes that a transport-successful execution can violate these comparisons, and
that at least one of them detects this particular fault. It does not establish that the
criterion detects arbitrary implementation faults, and no claim of completeness, fault tolerance
or robustness is made from it.

\begin{table}[t]
\caption{Negative control: one preregistered attribution swap at round $10$, $N=5$, seed $12345$,
$30$ rounds. Counts are over $30$ rounds for the per-round levels and over $30\times5=150$
round--participant comparisons for the per-participant levels; the L4 rows therefore count
\emph{learner-state comparisons}, one per participant per round, and not the scalar entries counted in
Table~\ref{tab:fidelity}. The canonical run of Table~\ref{tab:fidelity} is unaffected by this
experiment.}
\label{tab:negctl}
\centering
\begin{tabular}{lr}
\toprule
\multicolumn{2}{l}{\emph{Message plane}}\\
retrievals / publications & $450$ / $330$\\
retries, timeouts, aborts & $0$, $0$, $0$\\
injected round's fetches successful & $5$ of $5$, first attempt\\
\midrule
\multicolumn{2}{l}{\emph{Semantic comparisons} (first divergence round)}\\
L1 participant vs.\ centralized & $0$ of $150$ (---)\\
L1 participant vs.\ aggregator & $\mathbf{2}$ of $150$ (r.\ $10$)\\
L2 joint profile & $\mathbf{1}$ of $30$ (r.\ $10$)\\
L3 settlement & $\mathbf{4}$ of $150$ (r.\ $10$)\\
L4 learner $\hat{u}$ & $\mathbf{80}$ of $150$ (r.\ $10$, persists)\\
L4 learner $n$ & $0$ of $150$ (---)\\
L5 coarse level $m$ & $0$ of $30$ (---)\\
\bottomrule
\end{tabular}
\end{table}

\subsection{A platform-runtime observation}

An earlier long-duration validation attempt, under the same implementation and configuration,
terminated at round $4315$ when two forwarder processes crashed; the scientific states were
exactly equal over all $4315$ completed rounds. No repair was made and none is incorporated into
the implementation reported here: the subsequent preregistered run used a byte-identical
configuration and completed all $8000$ rounds. The episode is recorded as an observation of
platform-runtime variability rather than as a scientific discrepancy, and it establishes neither
robustness of the forwarder implementation nor long-run durability as a property.

\subsection{Scope of the empirical claim}

The validation reported here covers one seed, $N=5$, the canonical operating point under the
profile-mean estimator, and one complete $8000$-round distributed trajectory, with a separate
short-horizon impairment validation at $N=5$ over six conditions. It establishes semantic
fidelity for that configuration. It does not establish equivalence for other seeds, population sizes, operating points or network
conditions.

\subsection{Limitations}
\label{sec:limits}

The scientific evidence is bounded in five ways, each established by the experiments reported
above. It is measured at $N=5$ units. Every rate is a finite-horizon rate at the canonical
$8000$-round horizon; Section~\ref{sec:feedback} shows that at $96\,000$ rounds every condition
converges, including the mechanism with no corrective transfer, so no statement here is a
statement about eventual reachability. The incumbent value $\hat{u}^{(0)}=0.20$ is an
analyst-set prior rather than a fitted quantity, and the separation is sensitive to it: the
sweep of Section~\ref{sec:window} places the lower edge of the window between $0.10$ and
$0.15$, where the margin $\delta(j,k)$ of Section~\ref{sec:where} says it should be, and at $0.10$ the linear
structure retains every unit. The separation closes entirely when counterfactual payoffs are
supplied, so the result concerns what an experience-based learner can discover under bandit
feedback and not the incentive as an object. Finally, the semantic-fidelity validation covers a
single distributed configuration and a single seed under the profile-mean construction.

No claim is made about robustness, about durability of the forwarder implementation, about any
advantage of the chosen substrate over alternatives, or about behavior under network conditions
other than those reported.

% ============================================================ VIII
\section{Conclusion}
\label{sec:conclusion}

Two corrective transfers that satisfy the same five static criteria, with the same entry
thresholds, did not produce the same participation once owners learned from their own
settlements. One reached full participation from a collapse initialization in 96 of 96 seeds and
the other in none, within the canonical horizon and with disjoint confidence intervals. The
structures agree at the two profiles the static criteria inspect and differ only in between, so
the checks that accepted both were blind to the region that decided the outcome. Programs of
this kind should therefore examine what a transfer pays at partial participation, which can be
read off the mechanism before any simulation is run.

Distributing the mechanism raises a separate question, because a unit's settlement depends on
what the others declared and is then absorbed into state that shapes its later declarations. We
compared a distributed realization against the centralized reference at each participant's
declaration, the assembled profile, the settlements and the learner states, round by round. The
two agreed at every one of the 8000 rounds. Since that is what a correct implementation should
produce, we also injected one misattributed declaration after successful delivery: the
comparison separated the executions while the message plane reported no loss, retry or timeout.
Delivered messages are not evidence that the mechanism was executed.  The evidence covers five
units, one operating point, a finite horizon and one distributed seed, and the separation between the two transfers depends on the incumbent prior
and disappears when learners are given counterfactual payoffs. Section~\ref{sec:limits} states
these bounds in full.

\appendices

\section{Proof of Theorem~\ref{th:inv}}
\label{app:inv}

No payment term appears in the dispatch program of Section~\ref{sec:model}-B, so the allocation
at every declaration profile is that of the uncorrected mechanism. At the intended profile every
unit participates, so the indicator in \eqref{eq:transfer} is one and $R_i=s_0 f(Q_{-i})$. The conservative item
is the smallest participating declaration, so the leave-one-out capability is least when all
others declare it and satisfies $Q_{-i}\ge (N-1)q_C$; by (A3) $f$ vanishes at and above
$\bar{Q}$. Whenever $(N-1)q_C \ge \bar{Q}$---here $10.0$ against $9.0$, and under the
proportional rule $\bar{Q}(N)=0.60\,N q_A$ for every $N\ge 4$ at the present $q_C,q_A$---the
transfer is identically zero at the intended profile, for every $i$ and every $s_0$, so every
payment there equals its pre-correction value. A contract-selection margin is a difference
between two \emph{participating} declarations available to $i$ with the other declarations held
fixed; the indicator in \eqref{eq:transfer} is therefore one in both, and by (A1) the remaining
factor reads $Q_{-i}$ alone, so $R_i$ adds the same constant to each and cancels. \hfill$\blacksquare$

\section{Proof of Corollary~\ref{th:cor}}
\label{app:cor}

By \eqref{eq:thresholds}, $s_0^{\mathrm{elim}}$ is the smallest constant at which some
unilateral deviation from zero participation is profitable. At zero participation a deviating
unit has $Q_{-i}=0$, so by (A2) it receives exactly $s_0$, while by (A0) the units that continue
to abstain receive nothing; its deviation gain rises by $s_0$, so for any larger constant the
profile admits a profitable deviation. Invariance of the calibration is
Theorem~\ref{th:inv}. \hfill$\blacksquare$

\bibliographystyle{IEEEtran}
\bibliography{refs}

\end{document}